\documentclass{ifacconf}

\usepackage{graphicx}  
\usepackage{natbib}   
\usepackage{xcolor}
\usepackage{amsmath, amssymb}
\usepackage{mathtools}
\usepackage{algorithm}
\usepackage[noend]{algpseudocode}

\algrenewcommand{\Require}{\Statex \textbf{Input:}}  
\algrenewcommand{\Ensure}{\Statex \textbf{Output:}} 

\newtheorem{definition}{Definition}
\newtheorem{assumption}{Assumption}
\newtheorem{remark}{Remark}
\newtheorem{problem}{Problem}
\newtheorem{example}{Example}

\begin{document}
\begin{frontmatter}

\title{Security-Aware Planning and Control of Multi-Agent Systems with LTL Tasks} 

\author[First]{Georgios Mitsos} 
\author[Second]{Dimos V. Dimarogonas}
\author[First]{Siyuan Liu} 

\address[First]{\mbox{Department of Electrical Engineering, Eindhoven University of Technology}, Eindhoven, the Netherlands (e-mails: 
\{g.mitsos, s.liu5\}@tue.nl)}
\address[Second]{Division of Decision and Control Systems, KTH Royal Institute of Technology, Stockholm, Sweden (e-mail: dimos@kth.se)}

\begin{abstract}    
This paper presents a secure-by-construction planning and control framework for multi-agent systems subject to linear temporal logic (LTL) specifications. The framework protects sensitive information from a passive intruder with partial observations of the agents’ motion. Security in multi-agent coordination is captured by two notions that prevent the intruder from inferring whether a secret task has been executed and from identifying the agent responsible for its execution. 
The proposed framework incorporates the security constraints directly into the LTL synthesis procedure by constructing a secure finite transition system that removes all paths violating these constraints. Standard LTL synthesis is then applied to this secure abstraction to generate discrete plans, which are then refined into dynamically feasible continuous trajectories. This synthesis procedure provides formal guarantees that the resulting behavior of the multi-agent system satisfies both the global LTL specification and the security constraints. The effectiveness of the proposed framework is demonstrated through a two-drone case study.
\end{abstract}

\begin{keyword}
Secure-by-construction synthesis, multi-agent systems, linear temporal logic, opacity, trajectory planning  
\end{keyword}

\end{frontmatter}

\section{Introduction} 
Linear temporal logic (LTL) \citep{baier2008} can specify complex tasks beyond point-to-point navigation \citep{lavalle2006}. By abstracting agent dynamics to finite-state models, model-checking techniques enable synthesis of fully automated, correct-by-design discrete plans \citep{smith2011}, which are then refined into continuous trajectories via low-level controllers \citep{fainekos2009, kloetzer2008}.

In multi-agent systems, LTL enables high-level planning beyond classical cooperative tasks, e.g., collision avoidance, consensus, and formation. However, interdependent tasks and coordination requirements make LTL synthesis over the joint state space quickly intractable. To improve scalability, various decentralized methods have been proposed \citep{kloetzer2007, loizou2004, guo2014}.

Despite these advances, information security during task execution has received comparatively less attention. Autonomous cyber-physical systems (CPS), such as drone swarms and intelligent transportation systems, often operate in settings where task correctness and confidentiality are equally critical. This motivates secure-by-construction strategies that incorporate security constraints directly into the control synthesis process \citep{liu2022,yin2021}. Recent works have incorporated information-flow security in robotic path planning via the notion of \emph{opacity}; see, e.g.,  \citep{ma2021optimal,hadjicostis2018,wang2020hyperproperties,xie2021secure} for single-agent systems and \citep{yu2022} for multi-agent systems. However, these works do not address systems with continuous dynamics, where controllers must simultaneously ensure security and LTL satisfaction in the physical workspace.

In this paper, we propose a security-aware planning framework for multi-agent systems with continuous affine dynamics subject to LTL specifications. We consider the presence of an intruder modeled as a malicious observer with partial observations of the system. The main contributions of this work are summarized as follows:

\vspace{-1mm}
(i) We introduce two complementary information-flow security notions: Type-A Security, which prevents the intruder from inferring whether any agent has visited a secret region, and Type-B Security, which prevents identifying the agent responsible for the secret visit. These requirements can be enforced independently or jointly.

\vspace{-1.5mm}
(ii) Based on these notions, we construct secure Weighted Transition Systems (WTSs) and apply standard LTL synthesis to generate secure discrete plans, which are then refined into dynamically feasible continuous trajectories that satisfy both the LTL tasks and the security constraints.
Compared to existing LTL-based security-aware planning methods \citep{yang2020,yu2022,xie2021secure}, where systems are modeled as finite WTSs, our framework is, to the best of our knowledge, the first directly applicable to multi-agent systems with continuous dynamics, input constraints, and complex LTL specifications. Moreover, unlike \citep{zhong2025}, which focuses on opacity and safety, our framework addresses security-aware multi-agent planning under general LTL tasks.

\vspace{-1.5mm}
(iii) Under the proposed security notions, the resulting discrete planning algorithm achieves significantly lower complexity than that in \citep{yu2022} (cf. Section \ref{subsec:task_satisfaction}).

The remainder of the paper is organized as follows. Section~\ref{sec:preliminaries} introduces preliminaries and notation. Section~\ref{sec:security} formalizes the security notions for multi-agent systems, and states the planning problem.
Section~\ref{sec:secure_planning} describes the planning and control framework, ensuring both LTL satisfaction and security in continuous workspaces. Section~\ref{sec:case_study} demonstrates the proposed method through a two-drone case study, and Section~\ref{sec:conclusion} concludes the paper.

\vspace{-0.1cm}
\section{Preliminaries} \label{sec:preliminaries}
\vspace{-0.1cm}
\subsubsection{Notation.} Let $\mathbb{R}$ denote the set of real numbers and $\mathbb{N}$ the set of positive integers. We denote $\mathbb{R}_{\geq 0}$ and $\mathbb{R}_+$ the sets of nonnegative and  positive real numbers, respectively. For any $n,m \in \mathbb{N}$, $\mathbb{R}^n$ denotes the set of $n$-dimensional column vectors and $\mathbb{R}^{n \times m}$ the set of $n \times m$ real matrices. Vectors are denoted by bold lowercase letters and matrices by bold uppercase letters. A class-$\mathcal{K}$ function is a function $\alpha: \mathbb{R}_{\geq 0} \to \mathbb{R}_{\geq 0}$ that is strictly increasing and with $\alpha(0) = 0$. For any vector $\mathbf{x} \in \mathbb{R}^n$, $\|\mathbf{x}\|$ denotes the Euclidean norm, i.e., $\|\mathbf{x}\| = \sqrt{\mathbf{x}^\top \mathbf{x}}$.

\subsection{Polytopes}

A polytope is a bounded convex set in $\mathbb{R}^n$ and is full-dimensional if it has a non-empty interior in $\mathbb{R}^n$. Any full-dimensional polytope can be represented as the intersection of $d \geq n + 1$ closed half-spaces, i.e., $P = \big\{ \mathbf{x} \in \mathbb{R}^n \mid \mathbf{p}_i^\top \, \mathbf{x} + g_i \geq 0, \, i = 1, \dots, d \big\}$. 

A polytope $P \subset \mathbb{R}^n$ consists of lower-dimensional \emph{faces}, obtained by intersecting $P$ with its supporting hyperplanes. A $j$-face has dimension $j$ $(0 \leq j < n)$; facets are $(n-1)$-faces, edges are 1-faces, and vertices are 0-faces.

\subsection{System model} \label{subsec:system_model}

Consider a multi-agent system with $M$ agents, indexed by $\mathcal{A} = \{1, \dots, M\}$, operating in a workspace modeled as a full-dimensional polytope $P \subset \mathbb{R}^n$. Each Agent~$i \in \mathcal{A}$ follows the linear affine dynamics
\begin{equation} \label{eq:dynamics}
   \mathbf{\dot x}_i = \mathbf{A}_i \, \mathbf{x}_i + \mathbf{B}_i \, \mathbf{u}_i + \mathbf{b}_i,
\end{equation}
where $\mathbf{x}_i \in P$ is Agent~$i$'s state vector, $\mathbf{u}_i \in U_i \subset \mathbb{R}^m$ is the input vector, $\mathbf{A}_i \in \mathbb{R}^{n \times n}$ and $\mathbf{B}_i \in \mathbb{R}^{n \times m}$ are constant matrices, and $\mathbf{b}_i \in \mathbb{R}^n$ is a constant vector. The set $U_i$ is a convex polytope indicating bounded control inputs.

The continuous trajectory of Agent~$i$ is denoted by $\mathbf{x}_i(t): \mathbb{R}_{\ge 0} \to P$. The global trajectory of the system is the stacked vector of all agents’ trajectories, i.e., $\mathbf{x}(t) = \begin{bmatrix} \mathbf{x}_1(t)^\top & \dots & \mathbf{x}_M(t)^\top \end{bmatrix}^\top$.





\subsection{Weighted transition systems} \label{subsec:wts}

The goal of this paper is to synthesize a continuous trajectory for a multi-agent system as in \eqref{eq:dynamics} subject to security constraints (cf. Sec.~\ref{sec:security}) and LTL specifications (cf. Sec.~\ref{subsec:ltl}), defined over a partitioned abstraction of the agent's motion. The polytopic workspace is partitioned into non-overlapping sub-polytopes $q_1, \dots, q_p$ such that $P = \bigcup_{j=1}^p q_j$ and $q_j \cap q_k = \emptyset$, for $j \neq k$. 
Each Agent~$i$'s motion among regions is abstracted as transitions in a finite Weighted Transition System (WTS) $\mathcal{T}^i$, which may differ across agents. 

\begin{definition} \label{def:wts}
    A Weighted Transition System $\mathcal{T}^i$ is a tuple $\left( Q^i, Q_0^i, \longrightarrow_i, w_i, \mathcal{AP}_i, L_i \right)$, where:
    \begin{itemize}
        \item $Q^i = \{ q_1, \dots, q_p \}$ is the set of states, where each $q_j$ corresponds to Agent~$i$ being in region $q_j$
        \item $Q_0^i \subseteq Q^i$ is the set of initial states
        \item $\longrightarrow_i \subseteq Q^i \times Q^i$  is the transition relation which collects all the \emph{feasible transitions} for Agent~$i$
        \item $w_i : Q^i \times Q^i \to \mathbb{R_+}$ is the cost function that assigns to each transition $(q,q') \in \longrightarrow_i$ a positive weight $w_i(q, q')$

        \item $\mathcal{AP}_i = \{\pi_j \mid j \in \mathbb{N} \}$ is the set of atomic propositions used for representing some properties of interest
        \item $L_i : Q^i \to 2^{\mathcal{AP}_i}$ is the labeling function specifying the properties that hold true at each state
    \end{itemize}
\end{definition}

The \emph{dynamical feasibility} of the transitions in $\longrightarrow_i$ w.r.t. the continuous system \eqref{eq:dynamics} is characterized in Sec.~\ref{subsec:feasibility}. The abstracted motion of all $M$ agents, each executing one transition per time step, is represented by a Global Weighted Transition System (gWTS), defined as the synchronous product of the individual agents' WTSs.

\begin{definition}\label{def: gWTS} 
    A Global Weighted Transition System $\mathcal{T}_g$ is a tuple $\bigotimes_{i=1}^M \mathcal{T}^i = \left( Q_g, Q_{g,0}, \longrightarrow_g, w_g, \mathcal{AP}_g, L_g \right)$, where:
    \begin{itemize}
        \item $Q_g = Q^1 \times \dots \times Q^M$ is the set of global states
        
        \item $Q_{g,0} = Q_0^1 \times \dots \times Q_0^M \subseteq Q_g$ is the set of initial states
        
        \item $\longrightarrow_g \subseteq Q_g \times Q_g$ is the transition relation, where, for any $q, q' \in Q_g$, $(q, q') \in \longrightarrow_g$ if $(q^i,q^{i \, \prime}) \in \longrightarrow_i, \forall i \in \mathcal{A}$

        \item $w_g : Q_g \times Q_g \to \mathbb{R_+}$ is the cost function, where, for any $(q,q') \in \longrightarrow_g$, $w_g(q, q') = \sum_{i=1}^M w_i(q^i,q^{i \, \prime})$ 

        \item $\mathcal{AP}_g = 2^{\mathcal{AP}_1} \times \dots \times 2^{\mathcal{AP}_M}$

        \item $L_g : Q_g \to \mathcal{AP}_g$ is the labeling function, where, for any $q \in Q_g$, $L_g(q) = \big(L_1 \big(q^1 \big), \dots, L_M \big(q^M \big) \big)$
    \end{itemize}
\end{definition}

Each global state $q = \big(q^1, \dots, q^M \big) \in Q_g$ collects the local states of all agents. A local path of Agent~$i \in \mathcal{A}$ is a finite or infinite sequence of states $\tau_i = \tau_i(1) \tau_i(2) \dots$ in $\mathcal{T}^i$, where
\[\tau_i(1) \in Q^i_0 \ \text{and} \ (\tau_i(j),\tau_i(j+1)) \in \longrightarrow_i, \ \text{for all} \ j \geq 1.
\]
The trace of a local path is the sequence of labels along it, i.e., $trace(\tau_i) = L_i(\tau_i) = L_i(\tau_i(1)) \, L_i(\tau_i(2)) \dots$.

A global path $\tau = \tau(1) \tau(2) \dots$ in $\mathcal{T}_g$ satisfies
\[\tau(1) \in Q_{g,0} \ \text{and} \ (\tau(j),\tau(j+1)) \in \longrightarrow_g, \ \text{for all} \ j \geq 1,
\]
and has $trace(\tau) = L_g(\tau) = L_g(\tau(1)) \, L_g(\tau(2)) \dots$.

The sets of all finite and infinite global paths in $\mathcal{T}_g$ are denoted $Path^*(\mathcal{T}_g)$ and $Path^\omega(\mathcal{T}_g)$, respectively, and the set of all paths is $Path(\mathcal{T}_g) = Path^\omega(\mathcal{T}) \ \cup Path^*(\mathcal{T})$. 

A special class of infinite paths is the prefix-suffix form
\begin{equation} \label{eq:prefix-suffix}
    \tau = \tau(1) \dots \tau(k) [\tau(k+1) \dots \tau(N)]^\omega,
\end{equation}
where the prefix $\tau(1) \dots \tau(k)$ is executed once, and the suffix $\tau(k+1) \dots \tau(N)$ is a cycle that repeats indefinitely. It must hold that $(\tau(N), \tau(k+1)) \in \longrightarrow_g$. In the sequel, we will also write a path as $\tau = \tau(1) \rightarrow_g \tau(2) \rightarrow_g \dots$.

\subsection{Control barrier functions}\label{subsec: cbf}

Consider an Agent~$i$ as in \eqref{eq:dynamics} and 
a set $\mathcal{C} \subseteq P \subset \mathbb{R}^n$ which is defined as the superlevel set of a continuously differentiable function $h: \mathbb{R}^n \to \mathbb{R}$, i.e., 
\begin{equation} \label{eq:safe_set}
\mathcal{C} = \{\mathbf{x} \in P \ | \ h(\mathbf{x}) \geq 0 \}.
\end{equation}

\begin{definition}\label{def:cbf}
Consider a set defined as in \eqref{eq:safe_set}. Function $h(\mathbf{x})$ is a control barrier function (CBF) if there exists a locally Lipschitz class-$\mathcal{K}$ function $\alpha$ s.t. for each $\mathbf{x}_i \in P$: 
    \begin{equation} \label{eq:cbf}
   \! \sup_{\mathbf{u}_i \in U_i} \Bigg[ \Bigg(\frac{\partial h}{\partial \mathbf{x}_i}(\mathbf{x}_i) \Bigg)^\top \! \! (\mathbf{A}_i \, \mathbf{x}_i + \mathbf{B}_i \, \mathbf{u}_i + \mathbf{b}_i)\Bigg] \! \geq \! -\alpha(h(\mathbf{x}_i)).
\end{equation}
\end{definition}
The CBF definition \citep{ames2019} will be used later in Sec. \ref{subsec:feasibility} to guarantee \emph{forward invariance} properties of workspace partitions.  

\subsection{Linear temporal logic} \label{subsec:ltl}

LTL formulas are defined over a set of atomic propositions $\mathcal{AP}$ using Boolean and temporal operators. The syntax of LTL is given recursively \citep{baier2008}:
\[
    \phi \Coloneqq p \mid \lnot \, \phi \mid \phi_1 \land \phi_2 \mid \mathcal{X}\phi \mid \phi_1 \, \mathcal{U} \, \phi_2,
\]
where $p \in \mathcal{AP}$ and $\phi, \, \phi_1, \, \phi_2$ are LTL formulas; $\mathcal{X}$ and $\mathcal{U}$ denote the \emph{next} and \emph{until} operators, respectively. The Boolean disjunction ($\lor$) is derived as $\phi_1 \lor \phi_2 {:=} \lnot (\lnot \phi_1 \land \lnot \phi_2)$, the Boolean $true$ as $p \lor \lnot p$, while the temporal operators \emph{eventually} ($\lozenge$) and \emph{always} ($\square$) are defined as $\lozenge \phi  {:=} true \ \mathcal{U} \ \phi$, and $\square \, \phi {:=} \lnot  \lozenge \, \lnot \phi$.

LTL formulas are interpreted over infinite traces. A trace $r \in \left(2^\mathcal{AP} \right)^\omega$ satisfies an LTL formula $\phi$, written as $r \models \phi$, if $\phi$ holds over $r$ according to the LTL semantics \citep{baier2008}, which are omitted here for brevity.

\begin{definition} 
    A Nondeterministic B\"{u}chi Automaton (NBA) is a tuple $\mathcal{B}_\phi = (S, S_0, F, \Sigma, \delta)$, where $S$ is a finite set of states; $S_0 \subseteq S$ is the set of initial states; $F \subseteq S$ is the set of accepting states; $\Sigma = 2^\mathcal{AP}$ is the input alphabet; and $\delta: S \times \Sigma \to 2^S$ is the transition relation.
\end{definition}

It has been shown in \citep{wolper1983} that, for any LTL formula $\phi$ over $\mathcal{AP}$, there exists a NBA $\mathcal{B}_\phi$ with input alphabet $\Sigma = 2^\mathcal{AP}$ such that $\mathcal{B}_\phi$ accepts exactly the infinite traces that satisfy $\phi$. 

\section{Security Constraints in Multi-Agent Systems and Problem Formulation} \label{sec:security}

In this section, we introduce two new security properties for multi-agent systems, capturing whether a malicious intruder modeled as an outside observer can infer certain secret information. The problem formulation follows.

\subsection{Security constraints for multi-agent systems}
Consider a multi-agent systems whose motion is described by a gWTS as in Def.~\ref{def: gWTS}. We assume the passive intruder has knowledge of the individual agent models, i.e., $\mathcal{T}^i = \left( Q^i, Q_0^i, \longrightarrow_i, w_i, \mathcal{AP}_i, L_i \right)$, for all $i \in \mathcal{A}$, and it can observe each Agent~$i$ through an output function $H_i: Q^i \to Y_i$, where $Y_i$ is the output set of possible observation symbols.

To formulate the security requirement, we aim to prevent the intruder from inferring secret information based on available observations; specifically, whether sensitive behaviors occur and, when applicable, which agents execute them. These sensitive behaviors are modeled as visits to designated secret states
$Q_S^i \subseteq Q^i$ for each Agent~$i$.

To incorporate these security-related features, the standard WTS in Def.~\ref{def:wts} is extended as
\[
    \mathcal{T}^i = \left( Q^i, Q_0^i, \longrightarrow_i, w_i, \mathcal{AP}_i, L_i, Y_i, H_i, Q_S^i \right).
\]
The corresponding gWTS is rewritten as
\begin{equation} \label{eq:gwts}
    \begin{aligned}
        \mathcal{T}_g = \big( Q_g, Q_{g,0}, \longrightarrow_g, w_g, \mathcal{AP}_g, L_g, Y_g, H_g \big),
    \end{aligned}
\end{equation}
where $Y_g = Y_1 \times \dots \times Y_M$ and $H_g: Q_g \to Y_g$ with $H_g(q) = \big(H_1 \big(q^1 \big), \dots, H_M \big(q^M \big) \big)$. An output sequence is denoted as $y_g = y_g(1) \, y_g(2) \dots$, or equivalently, $y_g = y_g(1) \rightarrow_g y_g(2) \rightarrow_g \dots$.

The two types of security notions are formalized as follows.

\begin{definition} \label{def:type_a}
    Consider the gWTS in \eqref{eq:gwts}. A global path $\tau \in Path(\mathcal{T}_g)$ is \emph{Type-A secure} if there exists a global path $\tau' \in Path(\mathcal{T}_g)$ such that:  
    \begin{itemize}
        \item $H_g(\tau) = H_g(\tau')$, and  
        \item for every $i \in \mathcal{A}$ and $j \geq 1$: $\tau_i(j) \in Q_S^i \Rightarrow \tau_i'(j) \notin Q_S^i$.  
    \end{itemize}
\end{definition}

Intuitively, this security property states that whenever an agent visits a secret state, there exists an alternative behavior of the system that is indistinguishable to the intruder, and in which the agent is not in a secret state at that time. This provides plausible deniability: the intruder cannot be certain if any specific agent has actually visited a secret state. In the sequel, we refer to $\tau$ as the ``real path" and to $\tau'$ as its ``copy path".

\begin{remark}
    For Type-A security to be meaningful, each Agent~$i \in \mathcal{A}$ must have at least one non-secret state, i.e. $Q_i \setminus Q_S^i \neq \emptyset$, and a non-secret initial state, i.e., $Q_0^i \setminus Q_S^i \neq \emptyset$. Otherwise, the property is violated trivially.
\end{remark}

Type-A security treats each agent independently; that is, there is no collaboration between agents to guarantee security. For each agent, satisfaction of Type-A security for all paths implies satisfaction of classical state-based opacity notions, including initial-state, current-state, and infinite-step opacity \citep{saboori2007}.
Note that Type-A security does not prevent the intruder from identifying an agent as the only candidate for a secret task, e.g., if the intruder knows the secret task location and only one agent produces the corresponding observation. Type-B security addresses such cases.

\begin{definition} \label{def:type_b}
    Consider the gWTS in \eqref{eq:gwts}. A global path $\tau \in Path(\mathcal{T}_g)$ is \emph{Type-B secure} if, for every Agent~$i \in \mathcal{A}$ and every time step $j \geq 1$: 
    \[
        \tau_i(j) \in Q_S^i \, \Rightarrow \, \exists k \neq i \ \text{s.t.} \ H_k(\tau_k(j)) = H_i(\tau_i(j)).
    \] 
\end{definition}
\vspace{-0.2cm}
Intuitively, Type-B security ensures that whenever an Agent~$i$ visits a secret state, at least one other agent $k$ produces the same observation at that time, preventing the intruder from identifying which agent is responsible for the secret visit. A path that satisfies both Type-A and Type-B security is called \emph{Type-A/B secure}.

Similar security notions for multi-agent systems have been introduced in \citep{yu2022}, which also capture whether an intruder can infer a secret state has been visited (Type-I) and by which agent (Type-II). 
However, unlike Type I, which requires the alternative (copy) path to never visit any secret state, our Type-A security notion imposes this restriction only at the time steps when an agent is in a secret state. Similarly, unlike Type II, which requires that some other agent visit a secret state, our Type-B leverages the intruder’s limited knowledge of the system's evolution and only requires that another agent produces the same observation at those steps. 
These relaxations reduce the size of the problem while preserving similar security guarantees, making our approach considerably more scalable than the one in \citep{yu2022}; a detailed complexity analysis is provided in Section~\ref{subsec:secure_planning_alg}.

We use the following simple example to illustrate the proposed security notions. 
\begin{example}    
Consider a two-agent system operating in a workspace shown in Fig.~\ref{fig:motivating_example}, where the blue circle and the green square indicate the location of Agents~1 and 2, respectively. Secret states are marked in red. At each time step, an agent may only move between adjacent states. 
We present one path satisfying both security notions, and another that violates them. 
Consider a global (finite) path $\tau = (D,E) \rightarrow_g (E,B)$ of the system as shown in Fig.~\ref{fig:motivating_example}$(i)$, with output sequence $H_g(\tau) = (y_3, y_2) \to_g (y_2, y_2) $.  Note that at the second time step, Agent~2 visits secret state $B$, i.e.,  $\tau_2(2) \in Q_S^2$. 
According to Def.~\ref{def:type_a}, the global path $\tau$ is Type-A secure since there exists another path $\tau' = (C,E) \rightarrow_g (B,E)$ with the same output sequence $H_g(\tau') = (y_3, y_2)\to_g(y_2, y_2)$, where $\tau_2'(2) \notin Q_S^2$. Thus, from the intruder's point of view, these two paths are indistinguishable.
Moreover, path $\tau$ is Type-B secure according to Def.~\ref{def:type_b}, since: for Agent~$2$ with $\tau_2(2) \in Q_S^2$, there exists Agent~$1$ s.t. $H_1(\tau_1(2)) = H_2(\tau_2(2)) = y_2$, which prevents the intruder from identifying which agent performs the secret task. Copy paths/agents are indicated with dotted outlines in Fig.~\ref{fig:motivating_example}.

Consider a global path $\tau = (A,B) \to_g (F,A)$, as shown in Fig.~\ref{fig:motivating_example}$(ii)$, with output sequence $H_g(\tau) = (y_1, y_2) \to_g (y_1, y_1)$. Both agents visit secret states at all time steps, i.e., $\tau_1(1), \tau_1(2) \in Q_S^1$, and $\tau_2(1), \tau_2(2) \in Q_S^2$. The path $\tau$ is not Type-A secure, since there does not exist a ``copy path" that satisfies the conditions in Def.~\ref{def:type_a}. In particular, for Agent~$1$, there is no path $\tau'_1$ with $\tau'_1 (j) \notin Q_S^i$, $j=1,2$, such that $H_g(\tau) = H_g(\tau')$. Likewise, for Agent~2, there is no $\tau'_2(2) \notin Q_S^2$ such that $H_2(\tau_2(2)) = H_2(\tau_2'(2))$.
Moreover, $\tau$ violates Type-B security at the first step, as both agents are in secret states and their observations differ, i.e., $H_1(\tau_1(1)) \neq H_2(\tau_2(1))$.
\end{example}

\begin{figure}[tb]
    \centering
    \begin{minipage}{0.48\linewidth}
        \centering
        \includegraphics[width=\linewidth]{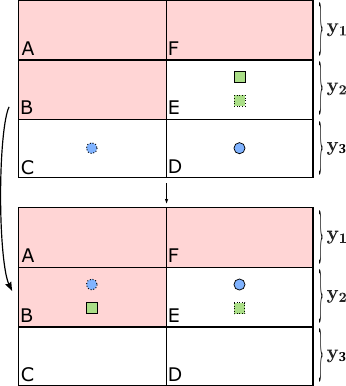}
        \centerline{$(i)$ Type-A/B secure path}
    \end{minipage}
    \hfill
    \begin{minipage}{0.48\linewidth}
        \centering
        \includegraphics[width=\linewidth]{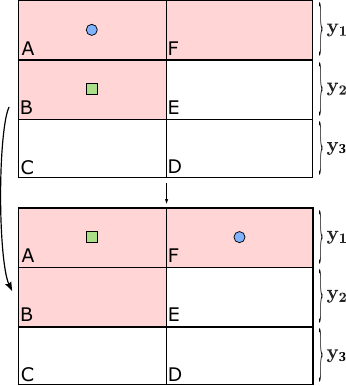}
        \centerline{$(ii)$ Non-secure path}
    \end{minipage}
    \caption{Two agents operate in a workspace. States in the same row generate the same observation $y_i$, $i = 1,2,3$. }
    \label{fig:motivating_example}
\end{figure}

\subsection{Problem formulation}

In the previous subsections, we defined the desired LTL formulas and security properties over discrete paths and their traces. Note that the ultimate goal is to design a planning and control scheme for agents evolving in continuous time and space. Therefore, given a discrete path, we require that every discrete transition corresponds to a realizable continuous motion between designated locations of the workspace.

\begin{assumption} \label{asmp:centroids}
For each discrete transition, Agent~$i$ moves between representative points of the corresponding regions in $\mathcal{T}^i$. Without loss of generality, we select these points as the region centroids. Thus, a global transition $(q, q')$ starts with all Agents~$i$ at the centroid of region $q^i$, and ends with all agents at the centroid of region $q^{i \, \prime}$.
\end{assumption}

Assumption~\ref{asmp:centroids} may exclude some feasible trajectories but allows us to guarantee the existence of a continuous controller for every discrete transition. The next definition formalizes how a global path is induced as the sequence of visited sub-polytopes.

\begin{definition} \label{def:discrete_continuous}
    Let $\mathbf{x}_i(t)$ be the trajectory of Agent~$i$, and define $f_i(t) = q_j$  when $\mathbf{x}_i(t) \in q_j$. The switching times of Agent~$i$ are defined recursively as
    \[
        t_{i,1} = 0, \quad t_{i,k+1} = \inf \Big\{ t > t_{i,k} \, | \, f_i(t) \neq f(t_{i,k}) \Big\}.
    \]
    This yields $\tau_i = \tau_i(1) \, \tau_i(2) \dots$, with $\tau_i(j) = f_i(t_{i,j})$. Let $(s_\ell)_{\ell \in \mathbb{N}}$ be the increasing ordering of all switching times across all agents. At each $s_\ell$, the global state $\tau(\ell) = (\tau_1(\ell), \dots, \tau_M(\ell))$ is defined, and all agents $i \in \mathcal{A}$ transition from $\tau_i(\ell - 1)$ to $\tau_i(\ell)$ simultaneously. If the system eventually remains in a global state $\tau(k)$, then $\tau(k)$ repeats indefinitely, generating an infinite global path.
\end{definition}

\begin{remark} \label{rem:ltl_security}
    By Def.~\ref{def:discrete_continuous}, each global trajectory $\mathbf{x}$ induces a unique global path $\tau$, representing the sequence of visited sub-polytopes. Since labels, observations, and secret states are constant within each sub-polytope, $\mathbf{x}(t)$ and $\tau$ produce identical traces. Hence, we write $trace(\mathbf{x}) \models \phi$ to indicate that a continuous trajectory satisfies an LTL formula $\phi$ with a slight abuse of notations.  We then have that $trace(\mathbf{x}) \models \phi$    
    if and only if $trace(\tau) \models \phi$, and, Type-A/B security of $\tau$ implies the same for $\mathbf{x}$.
\end{remark}

This paper first aims to compute a discrete plan that satisfies the global LTL specification and security requirements, and then synthesize a low-level (piecewise) continuous controller that implements the motion of the multi-agent system corresponding to each discrete transition.

\begin{problem} \label{pr:secure_ltl_planning}
    \textbf{Security-aware multi-agent LTL planning and control} \\
    Given an input-constrained  multi-agent system as in \eqref{eq:dynamics} operating in a partitioned workspace $P$ and a global task specified by an LTL formula $\phi$, synthesize a controller such that the generated global trajectory $\mathbf{x}: \mathbb{R}_{\geq 0} \to P$ satisfies:
    \begin{enumerate}
        \item $trace(\mathbf{x}) \models \phi$, and
        \item $\mathbf{x}$ is Type-A and/or Type-B secure.
    \end{enumerate}
\end{problem}

\section{Secure-by-Construction Planning and Control} \label{sec:secure_planning}

In this section, we present the planning and control algorithm for solving Problem~\ref{pr:secure_ltl_planning}. First, secure discrete abstractions of the multi-agent system are constructed, retaining only global paths that satisfy the security constraints. Next, the feasibility of each transition w.r.t. system dynamics and input constraints is verified. LTL synthesis is then employed to compute a prefix-suffix path that satisfies the global task. Finally, a continuous trajectory is generated from the discrete path, guaranteeing satisfaction of both the global LTL task and the security constraints.

\subsection{Secure global transition systems} \label{subsec:secure_gwts}
 
\subsubsection{Type-A security.}

Type-A security ensures that the intruder can never infer that any agent is in a secret state. To enforce it, we construct two abstractions: the Twin-Global Weighted Transition System (Twin-gWTS), which captures pairs of indistinguishable global states, and the Secure Twin-gWTS, which removes all pairs where any agent’s real and copy local states are both secret. 

The concept of a Twin-Weighted Transition System was introduced in \citep{yang2020} and is extended here to multi-agent systems. 

\begin{definition} \label{def:global_twin_wts}
    Let $\mathcal{T}_g$ be a gWTS in \eqref{eq:gwts}. The corresponding Twin-gWTS is 
\[    \mathcal{V} = \big( X, X_0, \longrightarrow_V, w_V, \mathcal{AP}_V, L_V \big),
\]  
    where:
    \begin{itemize}
        \item $X \subseteq Q_g \times Q_g$ is the set of states, with $(q_1, q_2) \in X$ if $q_1, q_2 \in Q_g$ and $H_g(q_1) = H_g(q_2)$
        \item $X_0 = X \cap (Q_{g,0} \times Q_{g,0})$ is the set of initial states
        \item $\longrightarrow_V \subseteq X \times X$ is the transition relation, where, for any $(q_1,q_2), (q_1', q_2') \in X$, $((q_1,q_2),(q_1',q_2')) \in \longrightarrow_V$ if: $(q_1,q_1') \in \longrightarrow_g$, $(q_2,q_2') \in \longrightarrow_g$, and $H_g(q_1^\prime) = H_g(q_2^\prime)$
        \item $w_V : X \times X \to \mathbb{R_+}$ is the cost function, where, for any $((q_1, q_2), (q_1', q_2')) \in \longrightarrow_V$, $w_V((q_1, q_2), (q_1', q_2')) = w_g(q_1, q_1')$
        \item $\mathcal{AP}_V = \mathcal{AP}_g$
        \item $L_V : X \to \mathcal{AP}_V$ is the labeling function, where, for any $(q_1,q_2) \in X$, $L_V((q_1,q_2)) = L_g(q_1)$
    \end{itemize}
\end{definition}

Each twin state $q = (q_1, q_2) = \left( \left(q_1^1, \dots, q_1^M), (q_2^1, \dots, q_2^M \right) \right)$ in the Twin-gWTS represents a pair of indistinguishable global states, i.e., $H_g(q_1) = H_g(q_2)$, where $q_1$ is the ``real state" and $q_2$ the ``copy state". Intuitively, the system is in $q_1$, but the intruder cannot distinguish it from $q_2$.  

\begin{definition}\label{def:global_twin_swts}
    Let $\mathcal{V}$ be a Twin-gWTS in Def.~\ref{def:global_twin_wts}. The corresponding Secure Twin-gWTS is 
    \[
    \begin{aligned}
        \mathcal{V}_s = \big( X_s, X_{s,0}, \longrightarrow_s, w_s, \mathcal{AP}_s, L_s \big),
    \end{aligned}
    \]
    and is obtained by removing all twin states $(q_1, q_2)$, in which there is an Agent~$i \in \mathcal{A}$ such that $q_1^i, q_2^i \in Q_S^i$. All other components are inherited from $\mathcal{V}$ in Def.~\ref{def:global_twin_wts}.
\end{definition}

For a twin state $x_s = (q_1, q_2) \in X_s$, denote $\Pi(x_s) = q_1$. For a twin path $\sigma = \sigma(1) \sigma(2) \dots \in Path(\mathcal{V}_s)$, the projected path in $Path(\mathcal{T}_g)$ is $\Pi(\sigma) = \Pi(\sigma(1)) \Pi(\sigma(2)) \dots$.

\subsubsection{Type-B security.}

Type-B security addresses scenarios in which the intruder can identify an agent as the only candidate for performing a secret task. To enforce it, we construct the Type-B gWTS as follows. 

\begin{definition} \label{def:gwts_type_b}
Let $\mathcal{T}_g$ be a gWTS in \eqref{eq:gwts}. The corresponding Type-B gWTS $\mathcal{T}_g^B$ is
\[
    \mathcal{T}_g^B = \big( Q_g^B, Q_{g,0}^B, \longrightarrow_g^B, w_g^B, \mathcal{AP}_g^B, L_g^B \big),
\]
and is obtained by restricting the global states to
\[ 
\!Q_g^B \!=\!  \Big\{ \!(q^1\!,\!\dots\!,q^M) \!\in\! Q_g \!\ |  q^i \!\in\! Q_S^i \!\Rightarrow \!\exists k \neq i\!:\! H_i(q^i) \!=\! H_k(q^k) \Big\}.
\]
All other components are inherited from $\mathcal{T}_g$ in \eqref{eq:gwts}.
\end{definition}

\subsubsection{Type-A/B security.}

Type-A and Type-B security can be enforced simultaneously by first constructing the Type-B gWTS $\mathcal{T}_g^B$, and then building the Twin-gWTS $\mathcal{V}$ and Secure Twin-gWTS $\mathcal{V}_s$ on top of it. The following propositions are provided without proofs, as they hold trivially by the construction of the secure gWTSs.

\begin{prop} \label{prop:security_correctness} By construction, every path $\sigma \in Path(\mathcal{V}_s)$ satisfies both Type-A and Type-B security.
\end{prop}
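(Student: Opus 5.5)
The statement should be read with $\mathcal{V}_s$ built on top of $\mathcal{T}_g^B$, as described just before it. It then says that the projected path $\Pi(\sigma)$ is Type-A and Type-B secure. The plan is to show each property directly from the two definitions, using the second component of the twin path as the witness for Type-A.

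\textbf{Type-A.} Let $\sigma = \sigma(1)\sigma(2)\dots \in Path(\mathcal{V}_s)$ with $\sigma(j) = (q_1(j), q_2(j))$. Set $\tau = \Pi(\sigma) = q_1(1) q_1(2)\dots$ and $\tau' = q_2(1) q_2(2)\dots$.

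First, I check that both are global paths of $\mathcal{T}_g$, or of $\mathcal{T}_g^B$ in the combined construction. Since $\sigma(1) \in X_{s,0} \subseteq X_0 = X \cap (Q_{g,0}\times Q_{g,0})$, both $q_1(1)$ and $q_2(1)$ are initial. By the definition of $\longrightarrow_V$ (inherited by $\longrightarrow_s$), every twin transition $(\sigma(j),\sigma(j+1))$ gives $(q_1(j),q_1(j+1))\in\longrightarrow_g$ and $(q_2(j),q_2(j+1))\in\longrightarrow_g$.

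Second, every twin state lies in $X$, so $H_g(q_1(j)) = H_g(q_2(j))$ for all $j$, and hence $H_g(\tau) = H_g(\tau')$.

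Third, every state lies in $X_s$, so by Def.~\ref{def:global_twin_swts} there is no agent $i$ with $q_1^i(j), q_2^i(j) \in Q_S^i$. Equivalently, $\tau_i(j)\in Q_S^i \Rightarrow \tau'_i(j)\notin Q_S^i$. Thus $\tau'$ witnesses Def.~\ref{def:type_a} for $\tau$, and the argument covers finite and infinite $\sigma$ alike.

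\textbf{Type-B.} Because $\mathcal{V}$ is built from $\mathcal{T}_g^B$, every global state appearing in a twin state, in particular $\tau(j) = q_1(j)$, belongs to $Q_g^B$. By the definition of $Q_g^B$, if $\tau_i(j)\in Q_S^i$ then some $k\neq i$ has $H_k(\tau_k(j)) = H_i(\tau_i(j))$. This is exactly Def.~\ref{def:type_b} at every step $j$.

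\textbf{Main obstacle.} The only step needing care is a bookkeeping issue rather than a mathematical one. Type-A requires the copy $\tau'$ to be a path of the same system in which security is being judged. Type-B requires that the real path inherit the restriction to $Q_g^B$. I would settle both by fixing the convention that $\mathcal{T}_g$ in Def.~\ref{def:global_twin_wts} is replaced by $\mathcal{T}_g^B$ in the combined construction. Then $\tau'$ is a path in $\mathcal{T}_g^B \subseteq \mathcal{T}_g$, hence also in $\mathcal{T}_g$, so Type-A holds in either reading of Def.~\ref{def:type_a}. If only Type-A is intended, i.e.\ $\mathcal{V}_s$ is built on $\mathcal{T}_g$, the Type-A argument above applies unchanged.
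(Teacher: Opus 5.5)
Your proposal is correct and is essentially the paper's argument. The paper omits the proof, saying the proposition holds trivially by construction, and your write-up spells out exactly that: the copy component of the twin path serves as the Type-A witness (via $X_0$, $\longrightarrow_V$, equal outputs on $X$, and the removal defining $X_s$), and the restriction to $Q_g^B$ gives Type-B, with your observations that $\mathcal{V}_s$ must be built on $\mathcal{T}_g^B$ and that paths of $\mathcal{T}_g^B$ are paths of $\mathcal{T}_g$ correctly filling in bookkeeping the paper leaves implicit.
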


\begin{prop} \label{prop:security_completeness} For every path $\tau \in Path(\mathcal{T}_g)$ that satisfies both Type-A and Type-B security, there exists at least one path $\sigma \in Path(\mathcal{V}_s)$ such that $\Pi(\sigma)=\tau$.  
\end{prop}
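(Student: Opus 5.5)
The plan is to build the twin path directly by pairing $\tau$ with a copy path supplied by Def.~\ref{def:type_a}, and then to check each layer of the construction: first $\mathcal{T}_g^B$, then $\mathcal{V}$, then $\mathcal{V}_s$. Since Type-A/B security is enforced by building $\mathcal{V}$ and $\mathcal{V}_s$ on top of $\mathcal{T}_g^B$, both components of every twin state must lie in $Q_g^B$, and I handle this first.

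\textbf{Step 1: $\tau$ is a path of $\mathcal{T}_g^B$.} Because $\tau$ is Type-B secure (Def.~\ref{def:type_b}), every global state $\tau(j)$ satisfies the defining condition of $Q_g^B$ in Def.~\ref{def:gwts_type_b}. Hence $\tau(j)\in Q_g^B$ for all $j$, and in particular $\tau(1)\in Q_{g,0}^B$. Transitions of $\mathcal{T}_g^B$ are inherited from $\longrightarrow_g$ restricted to $Q_g^B$, so $\tau\in Path(\mathcal{T}_g^B)$.

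\textbf{Step 2: choose the copy path.} By Type-A security there is a $\tau'$ with $H_g(\tau)=H_g(\tau')$ such that, for every $i\in\mathcal{A}$ and $j\ge 1$, $\tau_i(j)\in Q_S^i$ implies $\tau_i'(j)\notin Q_S^i$. Equality of output sequences forces $\tau$ and $\tau'$ to have the same length, whether finite or infinite.

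\textbf{Step 3: define and verify the twin path.} Set $\sigma(j)=(\tau(j),\tau'(j))$. Then:
\begin{itemize}
\item $H_g(\tau(j))=H_g(\tau'(j))$, so $\sigma(j)\in X$.
\item $\sigma(1)\in X_0$, because both components are initial.
\item Consecutive pairs satisfy the three conditions of $\longrightarrow_V$ in Def.~\ref{def:global_twin_wts}, since $(\tau(j),\tau(j+1))$ and $(\tau'(j),\tau'(j+1))$ are global transitions and $H_g(\tau(j+1))=H_g(\tau'(j+1))$.
\item No $\sigma(j)$ is removed in Def.~\ref{def:global_twin_swts}: a removal would require some $i$ with both $\tau_i(j)$ and $\tau'_i(j)$ in $Q_S^i$, which the second condition of Def.~\ref{def:type_a} excludes.
\end{itemize}
So $\sigma\in Path(\mathcal{V}_s)$, and $\Pi(\sigma)=\tau$ by definition of $\Pi$.

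\textbf{Main obstacle: the copy path may leave $Q_g^B$.} The delicate point is that $\tau'$ must also be a path of $\mathcal{T}_g^B$, not merely of $\mathcal{T}_g$. Suppose $\tau'_i(j)\in Q_S^i$ while $\tau_i(j)\notin Q_S^i$. Type-B security of $\tau$ then says nothing at $(i,j)$, yet membership $\tau'(j)\in Q_g^B$ requires some $k\ne i$ with $H_k(\tau'_k(j))=H_i(\tau'_i(j))$. Because outputs coincide, this is the same as $H_k(\tau_k(j))=H_i(\tau_i(j))$, which is not implied by the hypotheses.

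I would first try to repair $\tau'$ locally. Where $\tau'_i(j)$ is secret but $\tau_i(j)$ is not, replace $\tau'_i(j)$ by $\tau_i(j)$. This preserves outputs and the Type-A condition at that step, but it may break the transition relation at the splice points. I therefore expect this route to fail in general. In that case the statement should be read with Type-A security evaluated in the system where both requirements are imposed, i.e., with the copy path $\tau'$ required to lie in $Path(\mathcal{T}_g^B)$. This is the reading consistent with building $\mathcal{V}$ on $\mathcal{T}_g^B$. Under that reading, Steps 1--3 go through verbatim with $\mathcal{T}_g^B$ in place of $\mathcal{T}_g$, and I would state this qualification explicitly in the proof.
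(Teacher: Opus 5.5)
Your Steps 1--3 are exactly the ``by construction'' argument the paper has in mind; the paper gives no proof beyond asserting that the proposition holds trivially. The obstacle you flag is genuine and the paper does not address it. It is in fact fatal to the statement as written, so your hedge can be sharpened into a definite conclusion.

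\textbf{Counterexample.} Let two agents share the states $a,s_1,n_1,s_2,n_2,b_0,b_1,b_3$ and the secret set $\{s_1,s_2\}$. Take outputs $H(a)=H(b_0)=y_0$, $H(s_1)=H(n_1)=H(b_1)=y_1$, $H(s_2)=H(n_2)=y_2$, $H(b_3)=y_3$. Agent~1 starts at $a$ with transitions $a\to s_1\to n_2$ and $a\to n_1\to s_2$. Agent~2 starts at $b_0$ with transitions $b_0\to b_1\to b_3$.

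The path $\tau=(a,b_0)(s_1,b_1)(n_2,b_3)$ is Type-B secure: its only secret visit, $\tau_1(2)=s_1$, is matched by $H(b_1)=y_1$. It is also Type-A secure, with copy $\tau'=(a,b_0)(n_1,b_1)(s_2,b_3)$. These are the only two global paths with output $H_g(\tau)$. Pairing $\tau$ with itself is deleted at step~2 by Def.~\ref{def:global_twin_swts}. Pairing $\tau$ with $\tau'$ fails because $\tau'(3)=(s_2,b_3)\notin Q_g^B$, since $H(s_2)=y_2\neq y_3=H(b_3)$. Hence no $\sigma\in Path(\mathcal{V}_s)$ has $\Pi(\sigma)=\tau$. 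Adding self-loops at $n_2$, $s_2$, $b_3$ gives an infinite version.

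The example also shows that your local repair cannot work in general: no admissible copy exists at all, so there is nothing to splice.

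\textbf{Fixes.} Your qualified reading, in which the copy path must lie in $Path(\mathcal{T}_g^B)$, is one valid fix, and under it Steps 1--3 are a complete proof. A fix that leaves Def.~\ref{def:type_a} and the proposition untouched is to restrict only the real component when combining the constructions:
\begin{itemize}
\item twin states in $Q_g^B\times Q_g$;
\item initial states in $Q_{g,0}^B\times Q_{g,0}$;
\item real transitions from $\longrightarrow_g^B$ and copy transitions from $\longrightarrow_g$.
\end{itemize}
Type-B security constrains only the real path, so Prop.~\ref{prop:security_correctness} still holds and the $|Q|^{2M}$ bound is unaffected. Your Steps 1--3 then prove the statement verbatim.
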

Proposition \ref{prop:security_completeness} shows that the constructed $\mathcal{V}_s$ captures all paths that satisfy both Type-A and Type-B security, yielding completeness guarantees.

\subsection{Transition feasibility} \label{subsec:feasibility}

A discrete transition $(q, q') \in \longrightarrow_s$, with $q \neq q'$, is said to be \emph{dynamically feasible w.r.t. system \eqref{eq:dynamics}} if there exists a controller $\mathbf{u}_i $, for each real and copy Agent~$i \in \mathcal{A}$, such that the trajectory $\mathbf{x}_i(t): [0, t_f] \to P$ drives the agent from the centroid $\mathbf{\xi}_i$ of the current sub-polytope $q^i$ to the centroid $\mathbf{\xi}_i'$ of the next sub-polytope $q^{i \, \prime}$, satisfying: (i)~the agent's dynamics in \eqref{eq:dynamics}; (ii)~input constraint $\mathbf{u}_i(t) \in U_i$; (iii)~initial and final centroid positions; (iv)~if $q^i \neq q^{i \, \prime}$, Agent~$i$ crosses the exit facet $F_{c,i} = q^i \cap q^{i \, \prime}$ at time $t = t_c$ as in Def.~\ref{def:discrete_continuous}; and (v)~Agent~$i$ remains within $q^i$ for $t \in T_1 = [0, t_c]$, and within $q^{i \, \prime}$ for $t \in T_2 = \left[ t_c, t_f \right]$. Real and copy Agent~$i \in \mathcal{A}$ share the same dynamics and input bounds.

\begin{lem} \label{lem:invariance_cbf}
    Let $q^i \subset \mathbb{R}^n$ be a convex polytope with facets $F_j \in q^i$, each represented by $h_j(\mathbf{x}) = \mathbf{p}_j^\top \mathbf{x} + g_j = 0$. Suppose there exists a locally Lipschitz continuous controller $\mathbf{u}_i: T_1 \to U_i$ s.t. for all $F_j \in q^i$ and $t \in T_1$, 
    \begin{equation}\label{cbf:facet}
    \mathbf{p}_j^\top (\mathbf{A}_i \, \mathbf{x}_i(t) + \mathbf{B}_i \, \mathbf{u}_i(t) + \mathbf{b}_i) \geq -\alpha(h_j(\mathbf{x}_i(t))),
    \end{equation}
where 
    $\alpha(r) = \gamma \, r$ with $\gamma > 0$. Then, $q^i$ is \emph{forward invariant} for $t \in T_1$. Likewise, if the condition holds for all $F_j \in q^{i \, \prime}$ and $t \in T_2$, then $q^{i \, \prime}$ is forward invariant for $t \in T_2$.
\end{lem}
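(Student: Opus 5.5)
The plan is to reduce forward invariance of the polytope $q^i=\{\mathbf{x} \mid h_j(\mathbf{x})\ge 0,\ \forall j\}$ to forward invariance of each half-space $\{h_j\ge 0\}$ separately. The polytope is the intersection of these half-spaces, so it suffices to show that $h_j(\mathbf{x}_i(t))\ge 0$ for all $t\in T_1$ and every facet $F_j$, provided $\mathbf{x}_i(0)\in q^i$. By Assumption~\ref{asmp:centroids}, the initial point is the centroid $\boldsymbol{\xi}_i$, which lies in $q^i$.

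\textbf{Step 1: well-posedness.} The closed-loop vector field $\mathbf{A}_i\mathbf{x}+\mathbf{B}_i\mathbf{u}_i(t)+\mathbf{b}_i$ is affine in $\mathbf{x}$ and locally Lipschitz in $t$. A unique absolutely continuous (indeed $C^1$) solution therefore exists on $T_1$. Consequently, $\eta_j(t):=h_j(\mathbf{x}_i(t))$ is $C^1$, and its derivative is
\[
\dot\eta_j(t)=\mathbf{p}_j^\top(\mathbf{A}_i\mathbf{x}_i(t)+\mathbf{B}_i\mathbf{u}_i(t)+\mathbf{b}_i).
\]

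\textbf{Step 2: scalar comparison.} By \eqref{cbf:facet} with $\alpha(r)=\gamma r$, we have $\dot\eta_j(t)\ge-\gamma\,\eta_j(t)$ on $T_1$. Multiplying by $e^{\gamma t}$ gives $\frac{d}{dt}\big(e^{\gamma t}\eta_j(t)\big)\ge 0$. Integrating from $0$ yields
\[
\eta_j(t)\ge e^{-\gamma t}\eta_j(0)\ge 0
\]
for all $t\in T_1$. This is exactly where linearity of $\alpha$ is convenient. For a general locally Lipschitz class-$\mathcal{K}$ function one would instead invoke the comparison lemma, using the fact that $\dot r=-\alpha(r)$, $r(0)\ge0$ stays nonnegative. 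Since the inequality is assumed to hold along the trajectory, and not merely on the boundary, no Nagumo-type tangency argument is needed.

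\textbf{Step 3: conclusion.} Intersecting over all $j$ gives $\mathbf{x}_i(t)\in q^i$ for all $t\in T_1$. The $T_2$ statement follows identically, with the initial time shifted to $t_c$ and the facets of $q^{i\,\prime}$. The needed initial condition $\mathbf{x}_i(t_c)\in q^{i\,\prime}$ holds because at $t_c$ the agent lies on the shared facet $F_{c,i}\subseteq q^{i\,\prime}$.

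\textbf{Main obstacle.} The only delicate point is this initial-membership issue on $T_2$. The regions were declared disjoint, so strictly the shared facet belongs to only one of them, and one must interpret $q^{i\,\prime}$ as its closure $\{h_j\ge0\}$ for the argument to apply. I would state this explicitly and otherwise treat the comparison argument as routine.
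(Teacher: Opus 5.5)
Your proof is correct, but it takes a different route from the paper's. The paper's proof is a citation: it asserts that each $h_j$ is a CBF in the sense of Definition~\ref{def:cbf} and invokes the forward-invariance theorem of \citep{ames2019}. You argue directly instead, writing $q^i$ as the intersection of its facet half-spaces and integrating the scalar inequality $\dot\eta_j \ge -\gamma\eta_j$ with the factor $e^{\gamma t}$.

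The citation is shorter and covers a general (extended) class-$\mathcal{K}$ $\alpha$ at once. However, it glosses over a mismatch in hypotheses. The standard theorem concerns a single barrier function and a Lipschitz state-feedback law satisfying the CBF condition on a whole domain. The lemma instead involves several simultaneous constraints, an open-loop input $\mathbf{u}_i(t)$, and an inequality assumed only along one trajectory. For the same reason, calling the $h_j$ CBFs per Definition~\ref{def:cbf}, which requires the supremum condition at every $\mathbf{x}_i \in P$, does not actually follow from the lemma's hypothesis.

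Your version needs none of this. It handles the intersection trivially and gives the explicit bound $h_j(\mathbf{x}_i(t)) \ge e^{-\gamma t} h_j(\mathbf{x}_i(0))$. That bound also exposes something the citation hides. Starting from the centroid, which is an interior point, every $h_j$ of $q^i$ stays strictly positive on $[0,t_c]$, including the one for the exit facet. So in continuous time, the crossing constraint $h_{c,i}(\mathbf{x}_i(t_c))=0$ of Problem~\ref{pr:trajectory_optimization} cannot hold if the exit-facet CBF constraint is imposed on all of $[0,t_c]$.

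This does not affect the lemma itself. Since the lemma is a conditional invariance statement, it also does not strictly need the $T_2$ initial-membership discussion you flag. Your point that the regions must be read as closed, despite being declared disjoint, is still a fair criticism of the setup.
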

\begin{pf} 
Note that the functions $h_j$ satisfying \eqref{cbf:facet} are CBF as in Def.~\ref{def:cbf}. It has been shown that any locally Lipschitz continuous controller 
$\mathbf{u}_i(t): [0, t_c] \to U_i$ that satisfies the CBF constraints \eqref{cbf:facet} on all facets $F_j$ of $q^i$ renders  $q^i$ 
\emph{forward invariant} \citep{ames2019}, i.e., $  \mathbf{x}_i(0) \in q^i \Rightarrow \mathbf{x}_i(t) \in q^i, \ \forall t \in T_1 = [0, t_c]$. The same reasoning applies to $q^{i \, \prime}$.  \hfill    $\blacksquare$
\end{pf}

Lemma~\ref{lem:invariance_cbf} ensures that by applying $\mathbf{u}_i$ satisfying \eqref{cbf:facet}, the closed-loop trajectory $\mathbf{x}_i$ remains within $q^i$ for all $t \in T_1$, and  $q^{i \, \prime}$ for all $t \in T_2$, and thus, feasibility condition (v) is satisfied. Consequently, if $q^i \neq q^{i \, \prime}$, we have $\mathbf{x}_i(t_c) \in F_{c,i} = q^i \cap q^{i \, \prime}$, i.e., $h_{c,i}(\mathbf{x}_i(t_c)) = 0$, satisfying condition (iv). 

To determine if a transition $(q, q') \in \longrightarrow_s$ is feasible, we check if there exist controllers, one for each real and each copy agent, that drive each agent between the corresponding sub-polytope centroids while satisfying conditions (i)-(v). This is done by solving a centralized optimization program for each $(q, q')$.

\begin{problem}
\label{pr:trajectory_optimization}
\textbf{Multi-agent trajectory optimization.} \\
For each real and copy Agent~$i$, compute a trajectory $\mathbf{x}_i$ and a locally Lipschitz continuous controller $\mathbf{u}_i$ that realizes $(q, q')$ while satisfying feasibility conditions (i)–(v), by solving the following centralized quadratic program (QP): 
    \begin{align*}
        \min_{\mathbf{x}_i, \mathbf{u}_i,\, i \in \mathcal{A}} \quad & \int_0^{t_f} \sum_{i \in \mathcal{A}} ||\mathbf{u}_i(t)||^2 dt  \\
        \text{s.t.} \quad 
            & \dot{\mathbf{x}}_i(t) = \mathbf{A}_i \mathbf{x}_i(t) + \mathbf{B}_i \mathbf{u}_i(t) + \mathbf{b}_i, \ \forall i \in \mathcal{A}  \\
            & \mathbf{u}_i(t) \in \mathcal{U}_i, \ \forall i \in \mathcal{A} \\
            & \mathbf{x}_i(0) = \mathbf{\xi}_i, \quad \mathbf{x}_i(t_f) = \mathbf{\xi}_i', \ \forall i \in \mathcal{A} \\
            & h_{c,i}(\mathbf{x}_i(t_c)) = \mathbf{p}_{c,i}^\top \mathbf{x}_i(t_c) + g_{c,i} = 0, \ \forall i \in \mathcal{A} \\
            & \mathbf{p}_j^\top (\mathbf{A}_i \, \mathbf{x}_i(t) + \mathbf{B}_i \, \mathbf{u}_i(t) + \mathbf{b}_i) \geq -\gamma \, h_j(\mathbf{x}_i(t))), \\ & \qquad \qquad \qquad \qquad \quad \forall i \in \mathcal{A}, \, F_j \in q^i, \, t \in [0, t_c]\\
            & \mathbf{p}_j^\top (\mathbf{A}_i \, \mathbf{x}_i(t) + \mathbf{B}_i \, \mathbf{u}_i(t) + \mathbf{b}_i) \geq -\gamma \, h_j(\mathbf{x}_i(t))), \\ & \qquad \qquad \qquad \quad \quad \ \forall i \in \mathcal{A}, \, F_j \in q^{i \, \prime}, \, t \in [t_c,t_f]
    \end{align*}
\end{problem}
\begin{remark}
The above QP is solved numerically in a sampled-data fashion over a fixed time horizon $t_f$, i.e., the control input is constant over each time step, the dynamics are approximated via a forward-Euler scheme, and the objective integral is discretized as a sum. Despite employing a centralized planning approach, the optimization program is convex and can be solved efficiently. If the QP admits a solution, then the discrete transition $(q, q') \in \longrightarrow_s$ is feasible: conditions (i)-(iii) are enforced by the first three constraints of the QP, while conditions (iv)-(v) follow from Lemma \ref{lem:invariance_cbf} and are enforced by the last three constraints.
Otherwise, the discrete transition $(q, q')$ is considered dynamically infeasible and is discarded from the WTS.
\end{remark}

\begin{remark}
Since sub-polytopes are defined by non-strict inequalities, trajectories remaining on facets over finite time intervals may create ambiguities in labels, observations, and secret status. To avoid this, a small slack variable $\epsilon>0$ is introduced in all CBF constraints:
    \[
      \mathbf{p}_j^\top (\mathbf{A}_i \, \mathbf{x}_i(t) + \mathbf{B}_i \, \mathbf{u}_i(t) + \mathbf{b}_i) \geq -\gamma \, (h_j(\mathbf{x}_i(t))) - \epsilon).
    \]
Subtracting $\epsilon$ from $h_j$ shifts the zero level set $h_j = 0$, i.e., the sub-polytope boundary, inward, ensuring trajectories remain strictly in the interior of each sub-polytope, except at facet-crossing instants $t=t_c$.
\end{remark}


\vspace{-0.1cm}
\subsection{LTL task satisfaction} \label{subsec:task_satisfaction}

In order to incorporate a given LTL specification $\phi$, we synchronize the resulting secure Twin gWTS $\mathcal{V}_s$ with the NBA $\mathcal{B}_\phi$ that accepts $\phi$
to form a Product B\"{u}chi Automaton (PBA).

\begin{definition} 
    A Product B\"{u}chi Automaton is
 $\mathcal{A}_P = \mathcal{V}_s \otimes \mathcal{B}_\phi  = (S_P, S_{P,0}, F_P, \delta_P, w_P)$,
    where:
    \begin{itemize}
        \item $S_P = X_s \times S$ is the set of product states
        \item $S_{P,0} = X_{s,0} \times S_0$ is the set of initial states
        \item $F_P = X_s \times F$ is the set of accepting states
        \item $\delta_P \subseteq S_P \times S_P$ is the transition relation, where, $((q,s), (q',s')) \in \delta_P$ iff $(q,q') \in \longrightarrow_s$ and $\exists \sigma \in L_s(q')$ such that $s' \in \delta(s,\sigma)$
        \item $w_P: \delta_P \to \mathbb{R}_+$ is the cost function, where, for any $((q,s), (q',s')) \in \delta_p$, $w_P((q,s), (q',s')) = w_s(q,q')$
    \end{itemize}
\end{definition}

The PBA further restricts the dynamics of $\mathcal{V}_s$ to satisfy $\phi$. 
A path $\tau$ in $\mathcal{V}_s$ satisfies $\phi$ if and only if it corresponds to an accepting path in the PBA, i.e., a path that starts from an initial state in $S_{P,0}$ and visits accepting states in $F_P$ infinitely often.  
Furthermore, if the PBA admits at least one accepting state, it also admits at least one in prefix-suffix form \eqref{eq:prefix-suffix} \citep{baier2008}.
The cost of $\tau$ is defined as $J(\tau) = \beta \, J_{\text{prefix}}(\tau) + (1 - \beta) \, J_{\text{suffix}}(\tau), \ \beta \in [0,1]$, with $J_{\text{prefix}}(\tau) = \sum_{i=1}^k w(\tau(i), \tau(i+1)),$ and $J_{\text{suffix}}(\tau) =\sum_{i=k+1}^{N-1} w(\tau(i), \tau(i+1)) \ + \ w(\tau(N), \tau(k+1))$.  

Finding an optimal discrete plan reduces to identifying a prefix-suffix path $\tau$ in the product system $\mathcal{A}_P$ that minimizes $J(\tau)$.
In the next subsection, we provide our planning and control algorithm to find an optimal trajectory that satisfies both an LTL formula and security constraints. 

\vspace{-0.2cm}
\subsection{Security-aware multi-agent planning and control} \label{subsec:secure_planning_alg}

\begin{algorithm}[b]
\caption{Secure Trajectory Planning and Control} \label{alg:secure_optimal_trajectory_planning}
\begin{algorithmic}[1]
    \Require \ LTL formula $\phi$, secure gWTS $\mathcal{W}$,  weight $\beta$
    \Ensure \ secure trajectory $\mathbf{x}^\star$, control $\mathbf{u}^\star$, cost $J^\star$ 

    \State Convert LTL formula $\phi$ to NBA $\mathcal{B}_\phi$
    
    \State Build PBA $\mathcal{A}_P$ from $\mathcal{W}$ and $\mathcal{B}_\phi$
    
    \State Initialize $\tau^\star \gets \emptyset$ and $J^\star \gets \infty$

    \For{each accepting state $q_\text{acc} \in F_P$}
    
        \State Compute shortest prefix path $\pi_\text{prefix}$ from any initial state $q_0 \in S_{P,0}$ to $q_\text{acc}$ using Dijkstra's algorithm
    
        \State Compute shortest suffix path $\pi_\text{suffix}$ starting and ending at $q_\text{acc}$ using Dijkstra's algorithm
    
        \State Compute total cost $J_\text{total} = \beta \cdot J_\text{prefix} + (1 - \beta) \cdot J_\text{suffix}$
    
        \If{$J_\text{total} < J^*$}
    
            \State $J^* \gets J_\text{total}$
        
            \State $\tau^* \gets \pi_\text{prefix} \oplus \pi_\text{suffix}$
        \EndIf
    \EndFor

    \State Reconstruct trajectory $\mathbf{x}^\star(t)$ and control $\mathbf{u}^\star(t)$ from $\tau^\star$
    
    \State \Return $\mathbf{x}^\star(t)$, $\mathbf{u}^\star(t)$, $J^\star$
\end{algorithmic}
\end{algorithm}

This section presents Alg.~\ref{alg:secure_optimal_trajectory_planning}, which computes a controller such that the resulting continuous multi-agent trajectory satisfies both the global LTL task and the security properties. 
Before executing the algorithm, the workspace is partitioned, and the motion of multi-agent system is abstracted as a gWTS $\mathcal{T}_g$ as in Sec.~\ref {subsec:wts}. 
Then, a secure gWTS is constructed as in Sec.~\ref{subsec:secure_gwts} to satisfy the security properties: the Secure Twin-gWTS $\mathcal{V}_s$ as in Def.~\ref{def:global_twin_swts} for Type-A security, and the Type-B gWTS $\mathcal{T}_g^B$ as in Def.~\ref{def:gwts_type_b} for Type-B. 
The transition feasibility in the resulting secure gWTS $\mathcal{W}$ (i.e., $\mathcal{V}_s$ or $\mathcal{T}_g^B$, depending on the security type) is checked via the optimization program in Sec.~\ref{subsec:feasibility}.

In Alg.~\ref{alg:secure_optimal_trajectory_planning}, line 1 converts the LTL formula $\phi$ to an NBA $\mathcal{B}_\phi$, and line 2 builds the PBA $\mathcal{A}_P$ from $\mathcal{W}$ and $\mathcal{B}\phi$. Line 3 initializes the optimal path and cost. Lines 4–10 iterate over all accepting states: for each $q_\text{acc}$, the shortest prefix path to the accepting state and the shortest suffix cycle are computed using Dijkstra’s algorithm, and their total cost is calculated (lines 5-7); if it improves the current minimum, the corresponding prefix-suffix path is stored as $\tau^\star$ (line 10). Finally, line 11 constructs the continuous trajectory $\mathbf{x}^\star(t)$ and control $\mathbf{u}^\star(t)$ from $\tau^\star$ by concatenating the continuous motion segments $\mathbf{x}^{q, q'}$ obtained from the optimization program in Sec.~\ref{subsec:feasibility}, one for each discrete transition $(q, q')$ in $\tau^\star$. Line 12 returns them along with the optimal cost $J^\star$. This ensures that the resulting trajectories satisfy both the LTL specification and the desired security properties.

\begin{thm}
Consider a multi-agent system as in \eqref{eq:dynamics} subject to an LTL formula $\phi$ and Type-A/B security constraints.
    Alg.~\ref{alg:secure_optimal_trajectory_planning} computes a continuous trajectory $\mathbf{x}^\star(t)$ that: (i)~satisfies $\phi$, (ii)~guarantees Type-A, and/or Type-B security, and (iii)~is dynamically feasible.
\end{thm}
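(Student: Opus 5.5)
The argument composes the guarantees of the individual construction stages; each of (i)--(iii) is inherited from an earlier result through the chain $\mathcal{T}_g \to \mathcal{W} \to \mathcal{A}_P \to \tau^\star \to \mathbf{x}^\star$. First, I will argue that $\tau^\star$ returned by lines 4--10 is an accepting prefix-suffix run of $\mathcal{A}_P$. The prefix $\pi_\text{prefix}$ starts in $S_{P,0}$ and reaches $q_\text{acc}\in F_P$. The suffix $\pi_\text{suffix}$ is a cycle through $q_\text{acc}$, so $q_\text{acc}$ is visited infinitely often in $\pi_\text{prefix}\oplus(\pi_\text{suffix})^\omega$. Projecting onto the first component (and then through $\Pi$ when $\mathcal{W}=\mathcal{V}_s$) gives a path of $\mathcal{W}$. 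By the construction of $\delta_P$ its trace is accepted by $\mathcal{B}_\phi$, and by the result of \citep{wolper1983} this trace satisfies $\phi$. I assume here that $\mathcal{A}_P$ has an accepting run; otherwise $J^\star=\infty$ and the algorithm returns nothing, and I will state this as the implicit hypothesis.

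Second, for (ii) I invoke Proposition~\ref{prop:security_correctness}. Every path of $\mathcal{V}_s$ projects to a Type-A secure global path, with the copy component serving as $\tau'$. Indeed, twin states enforce $H_g(q_1)=H_g(q_2)$ at every step, and the removed states are exactly those with $q_1^i,q_2^i\in Q_S^i$. Every path of $\mathcal{T}_g^B$ is Type-B secure, since all its states lie in $Q_g^B$. When $\mathcal{V}_s$ is built on top of $\mathcal{T}_g^B$, both properties hold. Since the projected discrete path is secure, Remark~\ref{rem:ltl_security} transfers security and satisfaction of $\phi$ to any continuous trajectory that induces exactly this path. The proof therefore reduces to showing that $\mathbf{x}^\star$ induces $\Pi(\tau^\star)$ in the sense of Def.~\ref{def:discrete_continuous}.

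Third, for (iii) and for this inducing property, I use the fact that every transition of $\mathcal{W}$ passed the feasibility check. Hence each segment $\mathbf{x}^{q,q'}$ solves Problem~\ref{pr:trajectory_optimization}. By Assumption~\ref{asmp:centroids}, each segment starts at the centroids of $q$ and ends at the centroids of $q'$, so the segments concatenate continuously, including the wrap-around transition from the last suffix state back to the first. By Lemma~\ref{lem:invariance_cbf}, together with the $\epsilon$-slack remark, each agent stays strictly inside $q^i$ on $[0,t_c)$ and crosses the shared facet exactly at $t_c$. The agent then stays inside $q^{i\,\prime}$. Consequently, each agent's switching times are exactly the crossing instants, and the region sequence of $\mathbf{x}^\star$ is $\Pi(\tau^\star)$. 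Input constraints and the dynamics hold segmentwise, which gives dynamic feasibility.

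The main obstacle is this last synchronization step. Def.~\ref{def:discrete_continuous} orders the switching times of all agents globally, whereas Problem~\ref{pr:trajectory_optimization} uses a crossing time $t_c$ common to all agents. Agents with $q^i=q^{i\,\prime}$ do not switch at all. I would handle this by fixing a common $t_c$ and $t_f$ for all agents in each QP, so that all switches occur simultaneously. Agents that remain in place are treated as remaining in their region, which produces a repeated local state consistent with the global step. I would also note that the copy agents only need to exist abstractly for Type-A security and are never executed.
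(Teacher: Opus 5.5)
Your proposal is correct and follows essentially the same chain as the paper's proof. Acceptance of $\tau^\star$ in $\mathcal{A}_P$ gives (i), Proposition~\ref{prop:security_correctness} gives (ii) at the discrete level, the QP segments give (iii), and Lemma~\ref{lem:invariance_cbf} with Remark~\ref{rem:ltl_security} transfers (i)--(ii) to $\mathbf{x}^\star$. You are more explicit than the paper (existence of an accepting run, continuity at the centroids, a common $t_c$ per segment), though note one case neither your synchronization argument nor the paper's proof covers. A step of $\Pi(\tau^\star)$ in which no real agent changes region (e.g.\ only the copy moves) produces no switching time under Def.~\ref{def:discrete_continuous}, so the induced path would drop that repeated state.
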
\vspace{-0.3cm}
\begin{pf}
    The optimal discrete plan $\tau^\star$ computed by Alg.~\ref{alg:secure_optimal_trajectory_planning} satisfies the LTL formula $\phi$, as it visits an accepting state in the PBA $\mathcal{A}_P$ infinitely often, and the desired security properties, as shown in Prop.~\ref{prop:security_correctness}. The continuous trajectory $\mathbf{x}^\star(t)$ is obtained by concatenating the trajectory segments $\mathbf{x}^{q,q'}(t)$ corresponding to each discrete transition $(q,q')$ in $\tau^\star$. By construction, each segment $\mathbf{x}^{q,q'}(t)$ respects the agents' dynamics and input constraints, hence $\mathbf{x}^\star(t)$ is dynamically feasible as discussed in Sec.~\ref{subsec:feasibility}. Moreover, Lemma~\ref{lem:invariance_cbf} guarantees 
    $f_i(t) = q_i$ for all $t \in [0, t_c]$, and, $f_i(t) = q_i'$ for all $t \in [t_c, t_f]$, where $f_i$ is the function defined in Def.~\ref{def:discrete_continuous}. Then, by Remark~\ref{rem:ltl_security}, the concatenated trajectory $\mathbf{x}^\star(t)$ satisfies $\phi$ and the same security properties as $\tau^\star$, which concludes the proof.
 \hfill    $\blacksquare$
\end{pf}
\vspace{-0.2cm}
\subsubsection{Complexity analysis.} \label{subsec:complexity} 
Consider a multi-agent system with $M$ agents, where each agent's abstracted motion is represented by a WTS $\mathcal{T}^i$ with at most $|Q|$ states. Let the NBA for the LTL formula have $|S|$ states. The gWTS $\mathcal{T}_g$ is the synchronous product of all $\mathcal{T}^i$, yielding at most $|Q|^M$ states. For Type-A/B security, the Type-B gWTS $\mathcal{T}_g^B$ is first constructed from $\mathcal{T}_g$, followed by the Twin-gWTS $\mathcal{V}$ and the Secure Twin-gWTS $\mathcal{V}_s$ as in Defs.~\ref{def:global_twin_wts}-\ref{def:global_twin_swts}, resulting in at most $|Q|^{2M}$ states for $\mathcal{V}_s$ and $|Q|^{2M}|S|$ states for the PBA. Thus, Alg.~\ref{alg:secure_optimal_trajectory_planning} performs a shortest-path search over a graph with at most $|Q|^{2M}|S|$  states. Note that Type-A security has the same complexity, while Type-B has at most $|Q|^M |S|$ states.
Compared to the approach in \citep{yu2022} which requires a graph search over $(2|Q|)^{2M^2+M} |S|$ states, our framework requires significantly lower computational complexity. In all cases, the dominant contribution to complexity comes from constructing the gWTS, while the contribution of the NBA is typically minor.

\vspace{-0.2cm}
\section{Case Study} \label{sec:case_study}
\vspace{-0.1cm}
\begin{figure}[t]
    \centering
    \begin{minipage}{0.95\linewidth}
        \includegraphics[width=\linewidth]{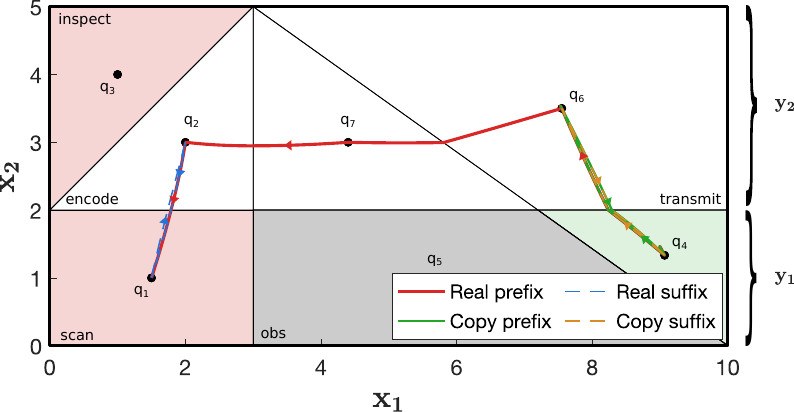}
        \centerline{(a) Drone~1 (real and copy trajectories)}
    \end{minipage}
    \vfill
    \begin{minipage}{0.95\linewidth}
        \includegraphics[width=\linewidth]{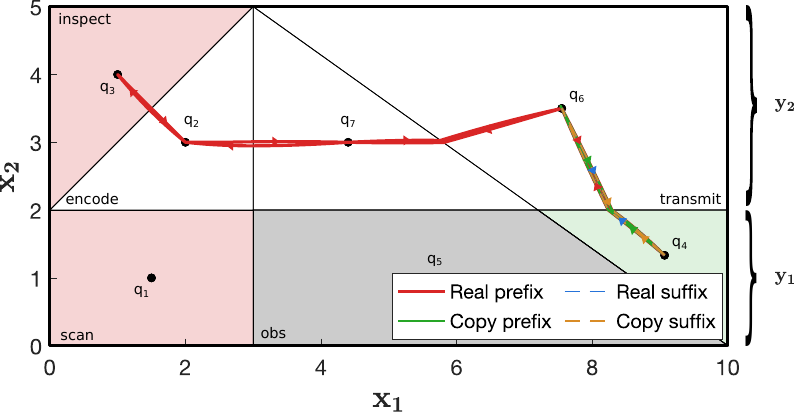}
        \centerline{(b) Drone~2 (real and copy trajectories)}
    \end{minipage}
    \caption{Workspace and synthesized trajectories. 
 }
    \label{fig:workspace_partition}
    \vspace{-0.15cm}
\end{figure}

We demonstrate our results in a two-drone scenario: Drone 1 scans air quality and encodes the collected data, while Drone 2 relays the data and visits a diagnostics station for inspection.
The bounded workspace $P = \{\mathbf{x} = \begin{bmatrix} x_1 & x_2 \end{bmatrix}^\top \, | \, 0 \leq x_1 \leq 10, \ 0 \leq x_2 \leq 5 \}$ is partitioned into seven convex sub-polytopes $q_1, \dots, q_7$ using a modified script from the LTLCon package \citep{kloetzer2007}. Each sub-polytope is labeled according to the atomic propositions $\mathcal{AP} = \{\texttt{scan}, \texttt{encode},$ $\texttt{transmit}, \texttt{inspect}, \texttt{obs} \}$ via the labeling function $L$. Two regions are secret: ~$q_1$ containing sensitive environmental data, and ~$q_3$, a secure diagnostics station. Both drones start from $q_4$ and must always avoid the obstacle region $q_5$, which models restricted airspace due to strong turbulence. An intruder monitors the airspace, which is divided into two observation regions: $y_1$ is the southern and $y_2$ is the northern section. The corresponding workspace partition is shown in Fig.~\ref{fig:workspace_partition}, where green, red, and gray areas denote initial, secret, and obstacle regions, respectively. 
The goal is to synthesize a trajectory satisfying Type A/B security and a mission specification given as an LTL formula:
\[
    \begin{aligned}
        \phi = \, &\square \, \lozenge \Big( \texttt{scan}^1 \land \lozenge \left( \texttt{encode}^1 \land \lozenge \texttt{transmit}^2 \right) \Big) \land\\ & \lozenge \texttt{inspect}^2 \land \square \, \lnot \, \texttt{obs}, 
\end{aligned}
\]
enforcing a repeating sequence: Drone~1 performs the air-quality scan, then encodes the data, followed by Drone~2's transmission. Moreover, Drone~2 must eventually execute an inspection, while both always avoid the obstacle. The superscript denotes the agent responsible for each proposition.

Each Drone~$i \in \{1,2\}$ follows linear affine dynamics \vspace{-0.2em}
\[
    \mathbf{\dot x}_i = \begin{bmatrix} 0.1 & -0.15 \\ -0.1 & 0.2 \end{bmatrix} \, \mathbf{x}_i + \begin{bmatrix} 1 & 0 \\ 0 & 1 \end{bmatrix} \, \mathbf{u}_i + \begin{bmatrix} 0.2 \\ 0.3 \end{bmatrix},\vspace{-0.3em}
\]where $\mathbf{x}_i = \begin{bmatrix} x_{i, 1} & x_{i, 2} \end{bmatrix}^\top \in P$, $\mathbf{u}_i = \begin{bmatrix} u_{i, 1} & u_{i, 2} \end{bmatrix}^\top \in U_i$ and $U_i = [-3, 3] \times [-3,3], \, \forall i \in \{1,2\}$.

The implementation was carried out in Python 3.11 on a MacBook Air M2. Polytope operations were performed using the \texttt{CDD} library \citep{cdd}. First, the discrete abstraction $\mathcal{T}_g$ of the system was constructed (Section~\ref{subsec:wts}). The Type-B gWTS was then obtained by removing states that violate Type-B security, while the Twin-gWTS and Secure Twin-gWTS were subsequently derived to retain only states satisfying Type-A security (Section~\ref{subsec:secure_gwts}). 
Transition feasibility in the resulted WTS was checked by solving the QP in Python, subject to the agent's dynamics, control bounds, facet-crossing constraints, and the CBF constraints (Section~\ref{subsec:feasibility}). For each transition, one CBF constraint is enforced for every facet of the current polytope over the interval $[0,t_c]$ and similarly for each facet of the next polytope over $[t_c,t_f]$. As an example, consider the transition in which Drone~$1$ moves from $q_2$ to $q_7$, and, specifically, the facet of $q_2$ given by $\begin{bmatrix} 1 & -1 \end{bmatrix} \mathbf{x} + 2 = 0$ with the CBF constraint:
$ \begin{bmatrix} 1 & -1 \end{bmatrix} (\mathbf{A}_1 \, \mathbf{x}_1 + \mathbf{B}_1 \, \mathbf{u}_1 + \mathbf{b}_1) \geq -\gamma \left( \begin{bmatrix} 1 & -1 \end{bmatrix} \, \mathbf{x}_1 + 2 \right) + \epsilon$ 
for all $t \in [0, t_c]$. We used the constants $\gamma=1$ and $\epsilon = 0.01$. Next, we follow Alg. \ref{alg:secure_optimal_trajectory_planning} to generate the continuous trajectory. 
The NBA corresponding to $\phi$ was generated using \texttt{LTL2BA} \citep{ltl2ba}. The secure abstraction was then synchronized with this NBA to construct the PBA (Section~\ref{subsec:task_satisfaction}). The shortest prefix–suffix path minimizing $J = 0.5 \, J_{\text{prefix}} + 0.5 \, J_{\text{suffix}}$ was computed, and the continuous trajectories were obtained by concatenating the trajectory segments associated with the discrete transitions along the prefix–suffix path (Section~\ref{subsec:secure_planning_alg}). 
The synthesized discrete path is: $(q_4, q_4) \to_g (q_4, q_6) \to_g (q_4, q_7) \to_g (q_4, q_2) \to_g (q_6, q_3) \to_g (q_6, q_2) \to_g (q_7, q_7) \to_g (q_2, q_6) \to_g [(q_1, q_4) \to_g (q_2, q_6)]^\omega  $, and the copy path: $(q_4, q_4) \to_g (q_4, q_6) \to_g (q_4, q_6) \to_g (q_4, q_6) \to_g (q_6, q_6) \to_g (q_6, q_6) \to_g (q_6, q_6) \to_g (q_6, q_6) \to_g [(q_4, q_4) \to_g (q_6, q_6)]^\omega$.  The runtime for the complete algorithm is approximately 70 sec on the MacBook Air M2. 

The corresponding continuous trajectories, shown in Fig.~\ref{fig:workspace_partition}, demonstrate that the LTL specification and the required security properties are satisfied. When Drone~1 visits the secret region $q_1$, the copy path simultaneously visits $q_4$, a non-secret region with the same observation, satisfying Type-A security, while Drone~2 is also in $q_4$, preventing the intruder from inferring which agent entered a secret state and ensuring Type-B security. Similarly, when Drone~2 visits $q_3$, its copy path is in $q_6$ and Drone~1 in $q_6$, satisfying Type-A/B security. These results demonstrate that the proposed approach generates dynamically feasible, LTL-satisfying, and secure trajectories for multi-agent systems.

\vspace{-0.2cm}
\section{Conclusion} \label{sec:conclusion}

In this paper, we addressed the problem of security-aware multi-agent motion planning and control under LTL specifications. We proposed a centralized framework that first computes a discrete path and then refines it into continuous, dynamically feasible trajectories that satisfy the LTL task while enforcing Type-A, Type-B, or Type-A/B security. 
Future work could focus on extending the approach to nonlinear dynamics, and investigating hierarchical architectures to improve scalability for multi-agent systems.
\vspace{-0.2cm}

\bibliography{ifacconf}                                     

@inproceedings{ames2019,
    author={A. Ames and S. Coogan and M. Egerstedt and G. Notomista and K. Sreenath and P. Tabuada},
    title={Control barrier functions: theory and applications},
    booktitle={Proc. 18th IEEE Eur. Conf. Control},
    year={2019},
    pages={3420--3431},
}

@book{baier2008,
    author={C. Baier and J. Katoen},
    title={Principles of Model Checking},
    publisher={MIT Press},
    year={2008},
}

@article{fainekos2009,
    author={G. E. Fainekos and A. Girard and H. Kress-Gazit and G. J. Pappas},
    title={Temporal logic motion planning for dynamic robots},
    journal={Automatica},
    year={2009},
    volume={45},
    number={2},
    pages={343--352},
}

@misc{cdd,
    author={K. Fukuda},
    title={cddlib: An efficient implementation of the Double Description Method},
    year={2025},
    address={ETH Z{\"u}rich, Switzerland},
}

@misc{ltl2ba,
    author={P. Gastin and D. Oddoux},
    title={\textsc{LTL2BA} (Version 1.3)},
    year={2020},
}

@article{guo2014,
    author={M. Guo and D. V. Dimarogonas},
    title={Multi-agent plan reconfiguration under local LTL specifications},
    journal={Int. J. Robot. Res.},
    year={2014},
    volume={34},
    number={2},
    pages={218--235},
}

@article{hadjicostis2018,
    author={C. N. Hadjicostis},
    title={Trajectory planning under current-state opacity constraints},
    journal={IFAC-PapersOnLine},
    year={2018},
    volume={51},
    pages={337--342},
}

@article{kloetzer2007,
    author={M. Kloetzer and C. Belta},
    title={Temporal logic planning and control of robotic swarms by hierarchical abstractions},
    journal={IEEE Trans. Robot.},
    year={2007},
    volume={23},
    number={2},
    pages={320--330},
}

@article{kloetzer2008,
    author={M. Kloetzer and C. Belta},
    title={A fully automated framework for control of linear systems from temporal logic specifications},
    journal={IEEE Trans. Autom. Control},
    year={2008},
    volume={53},
    number={1},
    pages={287--297},
}

@book{lavalle2006,
    author={S. M. LaValle},
    title={Planning Algorithms},
    publisher={Cambridge Univ. Press},
    year={2006},
}

@article{liu2022,
    author={S. Liu and A. Trivedi and X. Yin and M. Zamani},
    title={Secure-by-construction synthesis of cyber-physical systems},
    journal={Annu. Rev. Control},
    year={2022},
    volume={53},
    pages={30--50},
}

@inproceedings{loizou2004,
    author={S. G. Loizou and K. J. Kyriakopoulos},
    title={Automatic synthesis of multi-agent motion tasks based on LTL specifications},
    booktitle={Proc. 43rd IEEE Conf. Decis. Control},
    volume={1},
    year={2004},
    pages={153--158},
}

@inproceedings{saboori2007,
    author={A. Saboori and C. N. Hadjicostis},
    title={Notions of security and opacity in discrete event systems},
    booktitle={Proc. 46th IEEE Conf. Decis. Control},
    year={2007},
    pages={5056--5061},
}

@article{smith2011,
    author={S. L. Smith and J. Tumova and C. Belta and D. Rus},
    title={Optimal path planning for surveillance with temporal-logic constraints},
    journal={Int. J. Robot. Res.},
    year={2011},
    volume={30},
    number={14},
    pages={1695--1708},
}

@inproceedings{wolper1983,
    author={P. Wolper and M. Y. Vardi and A. P. Sistla},
    title={Reasoning about infinite computation paths},
    booktitle={Proc. 24th Annu. Symp. Found. Comput. Sci.},
    year={1983},
    pages={185--194},
}

@inproceedings{yang2020,
    author={S. Yang and X. Yin and S. Li and M. Zamani},
    title={Secure-by-construction optimal path planning for linear temporal logic tasks},
    booktitle={Proc. 59th IEEE Conf. Decis. Control},
    year={2020},
    pages={4460--4466},
}

@article{yin2021,
    author={X. Yin and M. Zamani and S. Liu},
    title={On approximate opacity of cyber-physical systems},
    journal={IEEE Trans. Autom. Control},
    year={2021},
    volume={66},
    number={4},
    pages={1630--1645},
}

@article{yu2022,
    author={X. Yu and X. Yin and S. Li and Z. Li},
    title={Security-preserving multi-agent coordination for complex temporal logic tasks},
    journal={Control Eng. Pract.},
    year={2022},
    volume={123},
    pages={105130},
}

@article{zhong2025,
    author={B. Zhong and S. Liu and M. Caccamo and M. Zamani},
    title={Secure-by-construction synthesis for control systems},
    journal={IEEE Trans. Autom. Control},
    year={2025},
    volume={70},
    number={6},
    pages={4170--4177},
}

@article{ma2021optimal,
  title={Optimal secret protections in discrete-event systems},
  author={Ma, Ziyue and Cai, Kai},
  journal={IEEE Trans. Autom. Control},
  volume={67},
  number={6},
  pages={2816--2828},
  year={2021},
}

@inproceedings{wang2020hyperproperties,
  title={Hyperproperties for robotics: Planning via HyperLTL},
  author={Wang, Yu and Nalluri, Siddhartha and Pajic, Miroslav},
  booktitle={IEEE Int. Conf. Robot. Autom.},
  pages={8462--8468},
  year={2020}
}

@inproceedings{xie2021secure,
  title={Secure-by-construction controller synthesis for stochastic systems under linear temporal logic specifications},
  author={Xie, Yifan and Yin, Xiang and Li, Shaoyuan and Zamani, Majid},
  booktitle={60th IEEE Conf. Decis. Control},
  pages={7015--7021},
  year={2021}
}

\end{document}